\newenvironment{varalgorithm}[1]
  {\algorithm}
  {\endalgorithm}
\newenvironment{list3}{
	\begin{list}{$\bullet$}{%
			\setlength{\itemsep}{0.05cm}
			\setlength{\labelsep}{0.2cm}
			\setlength{\labelwidth}{0.3cm}
			\setlength{\parsep}{0in} 
			\setlength{\parskip}{0in}
			\setlength{\topsep}{0in} 
			\setlength{\partopsep}{0in}
			\setlength{\leftmargin}{0.22in}}}
	{\end{list}}
\newenvironment{list4}{
	\begin{list}{$\bullet$}{%
			\setlength{\itemsep}{0.05cm}
			\setlength{\labelsep}{0.2cm}
			\setlength{\labelwidth}{0.3cm}
			\setlength{\parsep}{0in} 
			\setlength{\parskip}{0in}
			\setlength{\topsep}{0in} 
			\setlength{\partopsep}{0in}
			\setlength{\leftmargin}{0.16in}}}
	{\end{list}}
\newenvironment{list4a}{
	\begin{list}{$\bullet$}{%
			\setlength{\itemsep}{0.05cm}
			\setlength{\labelsep}{0.2cm}
			\setlength{\labelwidth}{0.3cm}
			\setlength{\parsep}{0in} 
			\setlength{\parskip}{0in}
			\setlength{\topsep}{0in} 
			\setlength{\partopsep}{0in}
			\setlength{\leftmargin}{0.16in}}}
	{\end{list}}
\newenvironment{list5}{
	\begin{list}{$\bullet$}{%
			\setlength{\itemsep}{0.05cm}
			\setlength{\labelsep}{0.2cm}
			\setlength{\labelwidth}{0.3cm}
			\setlength{\parsep}{0in} 
			\setlength{\parskip}{0in}
			\setlength{\topsep}{0in} 
			\setlength{\partopsep}{0in}
			\setlength{\leftmargin}{0.18in}}}
	{\end{list}}
\let\mathbb=\mathds %
\newtheorem{theorem}{Theorem}
\newtheorem{defn}{Definition}
\newtheorem{assum}{Assumption}
\newtheorem{remark}{Remark}
\newtheorem{lemma}{\bfseries Lemma}
\begin{document}

\title{\LARGE \bf Optimal CPU Scheduling in Data Centers via \\ a Finite-Time Distributed Quantized Coordination Mechanism}

\author{Apostolos~I.~Rikos, Andreas Grammenos, Evangelia Kalyvianaki, \\ Christoforos N. Hadjicostis, Themistoklis Charalambous, and Karl~H.~Johansson
\thanks{Apostolos~I.~Rikos and K.~H.~Johansson are with the Division of Decision and Control Systems, KTH Royal Institute of Technology, SE-100 44 Stockholm, Sweden. E-mails: {\tt \{rikos,kallej\}@kth.se}.}
\thanks{A. Grammenos is with the Department of Computer Science and Technology, University of Cambridge, Cambridge, and the Alan Turing Institute, London, UK.  E-mail:{\tt~ag926@cl.cam.ac.uk}.}
\thanks{E. Kalyvianaki is with the Department of Computer Science and Technology, University of Cambridge, Cambridge, UK.  E-mail:{\tt~ek264@cl.cam.ac.uk}.}
\thanks{C. N. Hadjicostis is with the Department of Electrical and Computer Engineering, University of Cyprus, 1678 Nicosia, Cyprus.  E-mail:{\tt~chadjic@ucy.ac.cy}.}
\thanks{T. Charalambous is with the Department of Electrical Engineering and Automation, School of Electrical Engineering, Aalto University, 02150 Espoo, Finland.  E-mail:{\tt~themistoklis.charalambous@aalto.fi}.}
}

\maketitle
\thispagestyle{empty}
\pagestyle{empty}

\begin{abstract}
In this paper we analyze the problem of optimal task scheduling for data centers. 
Given the available resources and tasks, we propose a fast distributed iterative algorithm which operates over a large scale network of nodes and allows each of the interconnected nodes to reach agreement to an optimal solution in a finite number of time steps. 
More specifically, the algorithm (i) is guaranteed to converge to the exact optimal scheduling plan in a finite number of time steps and, (ii) once the goal of task scheduling is achieved, it exhibits distributed stopping capabilities (i.e., it allows the nodes to distributely determine whether they can terminate the operation of the algorithm). 
Furthermore, the proposed algorithm operates exclusively with quantized values (i.e., the information stored, processed and exchanged between neighboring agents is subject to deterministic uniform quantization) and relies on event-driven updates (e.g., to reduce energy consumption, communication bandwidth, network congestion, and/or processor usage). 
We also provide examples to illustrate the operation, performance, and potential advantages of the proposed algorithm. 
Finally, by using extensive empirical evaluations through simulations we show that the proposed algorithm exhibits state-of-the-art performance. 
\end{abstract}

\begin{IEEEkeywords} 
CPU scheduling, optimization, distributed algorithms, quantization, event-triggered, finite-time termination. 
\end{IEEEkeywords}

\section{Introduction}\label{intro}

Modern Clouds infrastructure comprises a network of data centers, each containing thousands of server machines. Resource management in data centers is the procedure of allocating resources (e.g., CPU, memory, network bandwidth and disk space) to workloads such that their performance objectives are satisfied, given the available resources.

Resource allocation is inherently an optimization problem. %
However, solving it as such is challenging due to the scale and heterogeneity of the infrastructure and the dynamic nature of resource requirements of incoming and existing workloads. 
Centrally gathering all the required performance data from thousands of servers and running workloads, and solving the problem by a single solver is not ideal as gathered data becomes obsolete by the time the optimization is solved. For this reason, there has been recent interest towards practical distributed schedulers for solving this problem hierarchically. 
However, most of the proposed approaches employ heuristics that solve the problem approximately; see, e.g.,~\cite{mao_learning_2019, boutin_apollo_2014}. 

Recently, there has been a surge on distributed optimization, due to the wide variety of applications requiring related solutions ranging from distributed estimation to machine learning~\cite{Rabbat:2018IEEEProceedings,Tao:2019Survey}. 
Most of the works in the literature consider distributed solutions with asymptotic convergence which assume that the messages/quantities exchanged among nodes in the network are real numbers and therefore converge within some error \cite{2015_Alej_Hadj}. 
In several practical occasions, however, 
the quantities exchanged, such as scheduled tasks in CPU allocation, take discrete values. 
In addition, in many applications, such as in resource management in data centers, it is desirable to conclude the optimization in a finite number of steps via the exchange of quantized values, so that the exact solution is calculated and then applied.

In this paper, we focus on balancing the CPU utilization  across data center servers by carefully deciding how to allocate CPU resources to workloads in a distributed fashion. We further take into consideration that the allocated resources take discrete (quantized) values. 
We propose a distributed algorithm that solves and terminates the optimization problem in a finite number of steps using quantized values. 
Even though the proposed algorithm could be adopted in a wide variety of applications, here, we discuss it within the context of resource management in Cloud infrastructures. 
The main contributions of the paper are the following.
\begin{list5}
\item We present a distributed algorithm that solves the optimization problem in a finite number of time steps using quantized values. 
\item Then, we deploy a distributed stopping mechanism in order to terminate its operation, and hence the distributed optimization problem, in a finite number of time steps. It is the first distributed stopping mechanism for quantized average consensus algorithms. 
\item We provide an upper bound on the number of time steps needed for convergence based on properties of primitive matrices. 
The convergence time relies on connectivity (which is determined by the diameter of the network), rather than the size of the network. 
\item Simulations demonstrate that the proposed algorithm is suitable for large-scale networks, such as data centers. 
\end{list5}

The problem of providing a distributed solution to the resource coordination problem on a strongly connected digraph has been studied in the literature (see, e.g., \cite{2015_Alej_Hadj, 2013:CDC1Themis}), but for real values and not in an optimization context.
Our paper is a major departure from the current literature which mainly comprises distributed algorithms which operate with real values and exhibit asymptotic convergence within some error. 
Utilization of quantized values allows a more efficient usage of network resources, while finite time convergence allows calculation of the exact solution without any error. 
Our presented algorithm combines both characteristics and aims to pave the way for the use of fast bandwidth-efficient finite time algorithms which operate solely with quantized values over resource allocation problems.

\section{Notation and Preliminaries}\label{sec:preliminaries}

\subsection{Notation}

The sets of real, rational, integer and natural numbers are denoted by $ \mathbb{R}, \mathbb{Q}, \mathbb{Z}$ and $\mathbb{N}$, respectively. 
Symbols $\mathbb{Z}_{\geq 0}$ ($\mathbb{Z}_{>0}$) denote the sets of nonnegative (positive) integer numbers, while $\mathbb{Z}_{\leq 0}$ ($\mathbb{Z}_{<0}$) denote the sets of nonpositive (negative) integer numbers. 
For any real number $a \in \mathbb{R}$, the floor $\lfloor a \rfloor$ denotes the greatest integer less than or equal to $a$ while the ceiling $\lceil a \rceil$ denotes the least integer greater than or equal to $a$. 
Vectors are denoted by small letters, matrices are denoted by capital letters and the transpose of a matrix $A$ is denoted by $A^T$. 
For a matrix $A\in \mathbb{R}^{n\times n}$, the entry at row $i$ and column $j$ is denoted by $A_{ij}$.
By $\mathbf{1}$ we denote the all-ones vector and by $I$ we denote the identity matrix (of appropriate dimensions). 

Consider a network of $n$ ($n \geq 2$) nodes communicating only with their immediate neighbors. 
The communication topology is captured by a directed graph (digraph) defined as $\mathcal{G}_d = (\mathcal{V}, \mathcal{E})$. 
In digraph $\mathcal{G}_d$, $\mathcal{V} =  \{v_1, v_2, \dots, v_n\}$ is the set of nodes, whose cardinality is denoted as $n  = | \mathcal{V} | \geq 2 $, and $\mathcal{E} \subseteq \mathcal{V} \times \mathcal{V} - \{ (v_j, v_j) \ | \ v_j \in \mathcal{V} \}$ is the set of edges (self-edges excluded) whose cardinality is denoted as $m = | \mathcal{E} |$. 
A directed edge from node $v_i$ to node $v_j$ is denoted by $m_{ji} \triangleq (v_j, v_i) \in \mathcal{E}$, and captures the fact that node $v_j$ can receive information from node $v_i$ (but not the other way around). 
We assume that the given digraph $\mathcal{G}_d = (\mathcal{V}, \mathcal{E})$ is \textit{strongly connected}. 
This means that for each pair of nodes $v_j, v_i \in \mathcal{V}$, $v_j \neq v_i$, there exists a directed \textit{path}\footnote{A directed \textit{path} from $v_i$ to $v_j$ exists if we can find a sequence of nodes $v_i \equiv v_{l_0},v_{l_1}, \dots, v_{l_t} \equiv v_j$ such that $(v_{l_{\tau+1}},v_{l_{\tau}}) \in \mathcal{E}$ for $ \tau = 0, 1, \dots , t-1$.} from $v_i$ to $v_j$. 
Furthermore, the diameter $D$ of a digraph is the longest shortest path between any two nodes $v_j, v_i \in \mathcal{V}$ in the network. 
The subset of nodes that can directly transmit information to node $v_j$ is called the set of in-neighbors of $v_j$ and is represented by $\mathcal{N}_j^- = \{ v_i \in \mathcal{V} \; | \; (v_j,v_i)\in \mathcal{E}\}$. 
The cardinality of $\mathcal{N}_j^-$ is called the \textit{in-degree} of $v_j$ and is denoted by $\mathcal{D}_j^-$. 
The subset of nodes that can directly receive information from node $v_j$ is called the set of out-neighbors of $v_j$ and is represented by $\mathcal{N}_j^+ = \{ v_l \in \mathcal{V} \; | \; (v_l,v_j)\in \mathcal{E}\}$. 
The cardinality of $\mathcal{N}_j^+$ is called the \textit{out-degree} of $v_j$ and is denoted by $\mathcal{D}_j^+$. 

\subsection{Data Center and Workload Modelling}\label{systemModel}

We model a data center as a set $\mathcal{V}$ of server compute nodes, each denoted by $v_{i}\in\mathcal{V}$, which also operate as resource schedulers; this is a standard practice in modern data centers. 
All participating schedulers are usually interconnected with undirected communication links and, thus, the network topology forms a connected undirected graph. 
Nevertheless, our results are suitable for digraphs as well and, for this reason, hereafter we consider digraphs. 

A job is defined as a group of tasks, 
and $\mathcal{J}$ denotes the set of all jobs to be scheduled. 
Each job $b_{j} \in \mathcal{J}$, $j\in\{1,\ldots, |\mathcal{J}| \}$, requires $\rho_{j}$ cycles to be executed.
The estimated amount of resources (i.e., CPU cycles) needed for each job is assumed to be known before the optimization starts. A job task could require resources ranging from 1 to $\rho_j$ cycles, and the total sum of resources for all tasks of the same job is equal to $\rho_j$ cycles.
The total workload due to the jobs arriving at node $v_i$ is denoted by $l_i$. 
The time horizon $T_{h}$ is defined as the time period for which the optimization is considering the jobs to be running on the server nodes, before the next optimization decides the next allocation of resources. 
Hence, in this setting, the CPU capacity of each node, considered during the optimization, is computed as
$\pi_i^{\max} \coloneqq c_i T_h$, where $c_i$ is the sum of all clock rate frequencies of all processing cores of node $v_{i}$ given in cycles/second. The CPU availability for node $v_{i}$ at optimization step $m$ (i.e., at time $mT_{h}$) is given by $\pi_i^{\mathrm{avail}}[m] \coloneqq \pi_i^{\max} - u_{i}[m]$, where $u_{i}[k]$ is the number of unavailable/occupied cycles due to predicted or known utilization from already running tasks on the server over the time horizon $T_{h}$ at step $m$.

\begin{assum}
    We assume that the time horizon is chosen such that the total amount of resources demanded at a specific optimization step $m$, denoted by $\rho[m] \coloneqq \sum_{b_j[m] \in \mathcal{J}[m]} \rho_{j}[m]$, is smaller than the total capacity of the network available, given by $\pi^{\mathrm{avail}}[m] \coloneqq \sum_{v_{i}\in \mathcal{V}} \pi_i^{\mathrm{avail}}[m]$, i.e., $\rho[m] \leq \pi^{\mathrm{avail}}[m]$. 
\end{assum}

This assumption indicates that there is no more demand than the available resources. This assumption is realistic, since the time horizon $T_h$ can be chosen appropriately to fulfill the requirement. In case this assumption is violated, the solution will be that all resources are being used and some workloads will not be scheduled, due to lack of resources, but how to handle this is out of the scope of this paper. %

\section{Problem Formulation}\label{sec:probForm}

\subsection{Problem Statement}\label{Prob_Form}

Consider a network $\mathcal{G}_d = (\mathcal{V}, \mathcal{E})$. 
Each one of the $n  = | \mathcal{V} |$ nodes is endowed with a scalar quadratic local cost function $f_i : \mathbb{R}^n \mapsto \mathbb{R}$. 
In most cases \cite{2004:Sensors, Tao:2019Survey} a quadratic cost function of the following form is considered: 
\begin{equation}\label{local_cost_functions}
    f_i(z) = \dfrac{1}{2} \alpha_i (z - \rho_i)^2 , 
\end{equation}
where $\alpha_i > 0$, $\rho_i \in \mathbb{R}$ is the demand in node $v_i$ (and in our case is a positive real number) and $z$ is a global optimization parameter that will determine the workload at each node. 

The global cost function is the sum of the local cost functions $f_i : \mathbb{R}^n \mapsto \mathbb{R}$ (shown in \eqref{local_cost_functions}) of every node $v_i \in \mathcal{V}$. 
The main goal of the nodes is to allocate the jobs in order to minimize the global cost function 
\begin{align}\label{opt:1}
z^* =  \arg\min_{z\in \mathcal{Z}} \sum_{v_{i} \in \mathcal{V}} f_i(z) , 
\end{align}
where $\mathcal{Z}$ is the set of feasible values of parameter $z$. 
Optimization problem \eqref{opt:1} can be solved in closed form and $z^*$ is given by
\begin{align}\label{eq:closedform}
z^* =  \frac{\sum_{v_{i} \in \mathcal{V}} \alpha_i \rho_{i}}{\sum_{v_{i} \in \mathcal{V}} \alpha_i}.
\end{align}
Note that if $\alpha_i =1$ for all $v_{i}\in\mathcal{V}$, the solution is the average.

\subsection{Modification of the Optimization Problem}\label{Opt_Prob_Form}

Nodes require to calculate the optimal solution at every optimization step $m$ via a distributed coordination algorithm which relies on the exchange of quantized values and converges after a finite number of time steps. 
The proposed algorithm allows all nodes to balance their CPU utilization (i.e., the same percentage of capacity) during the execution of the tasks, i.e., 
\begin{align}\label{cond:balance}
\frac{w_i^*[m] +u_{i}[m]}{\pi_i^{\max}} &= \frac{w_j^*[m] +u_{j}[m]}{\pi_j^{\max}} \\
&= \frac{\rho[m] + u_{\mathrm{tot}}[m]}{\pi^{\max}}, \ \forall v_{i}, v_{j} \in \mathcal{V}, \nonumber
\end{align}
where $w_i^*[m]$ is the \emph{optimal} workload to be added to server node $v_{i}$ at optimization step $m$, $\pi^{\max} \coloneqq \sum_{v_{i}\in \mathcal{V}} \pi_i^{\max}$ and $u_{\mathrm{tot}}[m]=\sum_{v_{i}\in \mathcal{V}} u_{i}[m]$. For simplicity of exposition, and since we consider a single optimization step, we drop index $m$.
To achieve the requirement set in \eqref{cond:balance}, we need the solution (according to \eqref{eq:closedform}) to be \cite{2020:Themis_Kalyvianaki} 
\begin{align}\label{eq:closedform1}
z^* =  \frac{\sum_{v_{i} \in \mathcal{V}} \pi_i^{\max} \frac{\rho_{i}+u_{i}}{\pi_i^{\max}}}{\sum_{v_{i} \in \mathcal{V}} \pi_i^{\max}} = \frac{\rho + u_{\mathrm{tot}}}{\pi^{\max}}.
\end{align}
Hence, we modify \eqref{local_cost_functions} accordingly. Then, the cost function $f_i(z)$ in \eqref{local_cost_functions} is given by
\begin{align}\label{eq:fiz}
f_i(z) = \frac{1}{2}\pi_i^{\max} \left(z- \frac{\rho_{i}+u_{i}}{\pi_i^{\max}} \right)^2.
\end{align}
In other words, each node computes its proportion of workload and from that it is able to find the workload $w_i^*$ to receive, i.e.,
 \begin{align}\label{eq:optimal_workload}
w_i^*  = \frac{\rho + u_{\mathrm{tot}}}{\pi^{\max}} \pi_i^{\max} - u_{i}.
\end{align}

The solution should be found in a distributed way. 
Specifically, we aim at developing a distributed coordination algorithm to find the solution via the exchange of information only between neighboring nodes. 
The algorithm should rely on processing and transmitting of quantized information while its operation should exhibit finite time convergence.

\section{Preliminaries on Distributed Coordination}\label{sec:prelim_distrCoord}

\subsection{Quantized Average Consensus}\label{Prel_Aver}

The objective of quantized average consensus problems is the development of distributed algorithms which allow nodes to process and transmit quantized information. 
During their operation, each node utilizes short communication packages and eventually obtains a state $q^s$ which is equal to the largest quantized value (but not greater) or the smallest quantized value (but not lower) of the real average $q$ of the initial quantized states, after a finite number of time steps. 

In this paper we consider the case where quantized values are represented by integer\footnote{Following \cite{2007:Basar} we assume that the state of each node is integer valued.
This abstraction subsumes a class of quantization effects (e.g., uniform quantization).} numbers. 
This means that each node in the network is able to obtain a state $q^s$ which is equal to the ceiling $\lceil q \rceil$ or the floor $\lfloor q \rfloor$ of the real average $q$ of the initial quantized states of the nodes, after a finite number of time steps.

Since each node processes and transmits quantized information, we adopt the algorithm in \cite{2021:Rikos_Hadj_Johan}. 
Specifically, the algorithm in \cite{2021:Rikos_Hadj_Johan} is preliminary for our results in this paper and during its operation, each node is able to achieve quantized average consensus after a finite number of time steps. 
We make the following assumption which is necessary for the operation of the algorithm in \cite{2021:Rikos_Hadj_Johan} as well as the operation of our proposed algorithm in this paper. 
More specifically, assumption~\ref{str_digr} below is a necessary condition for each node $v_j$ to be able to calculate the quantized average of the initial values after a finite number of time steps. 

\begin{assum}\label{str_digr}
The communication topology is modeled as a strongly connected digraph. 
\end{assum}

The operation of the algorithm presented in \cite{2021:Rikos_Hadj_Johan}, assumes that each node $v_j$ in the network has an integer initial state $y_j[0] \in \mathbb{Z}$. 
At each time step $k$, each node $v_j \in \mathcal{V}$ maintains its mass variables $y_j[k] \in \mathbb{Z}$ and $z_j[k] \in \mathbb{Z}_{\geq 0}$, and its state variables $y^s_j[k] \in \mathbb{Z}$, $z^s_j[k] \in \mathbb{N}$ and  $q_j^s[k] = \lceil \frac{y_j^s[k]}{z_j^s[k]} \rceil$. 
It updates the values of the mass variables as 
\begin{subequations}\label{Prel_Aver_YZ}
\begin{align}
y_j[k+1] = y_j[k] + \sum_{v_i \in \mathcal{N}_j^-} \mathds{1}_{ji}[k] y_i[k] , \label{subeq:1a} \\
z_j[k+1] = z_j[k] + \sum_{v_i \in \mathcal{N}_j^-} \mathds{1}_{ji}[k] z_i[k] , \label{subeq:1b}
\end{align}
\end{subequations}
where  
\begin{align*}
\mathds{1}_{ji}[k] = 
\begin{cases}
1, & \text{if a message is received at $v_j$ from $v_i$ at $k$,} \\[0.1cm]
0, & \text{otherwise.} 
\end{cases}
\end{align*}
If the following event-triggered condition holds: 
\begin{list3}\label{tr_cond}
\item[(C1):] $z_j[k] > 1$ ,
\end{list3}
then, node $v_j$ updates its state variables as follows: 
\begin{subequations}\label{State_Prel_Aver_YZ}
\begin{align}
z^s_j[k+1] &= z_j[k+1], \\
y^s_j[k+1] &= y_j[k+1], \\
q^s_j[k+1] &= \Bigl \lceil \frac{y^s_j[k]}{z^s_j[k]} \Bigr \rceil .
\end{align}
\end{subequations}
Then, it splits $y_j[k]$ into $z_j[k]$ equal integer pieces (with the exception of some pieces whose value might be greater than others by one). 
It chooses one piece with minimum $y$-value and transmits it to itself, and it transmits each of the remaining $z_j[k]-1$ pieces to randomly selected out-neighbors or to itself. 
Finally, it receives the values $y_i[k]$ and $z_i[k]$ from its in-neighbors, sums them with its stored $y_j[k]$ and $z_j[k]$ values (as described in \eqref{subeq:1a}, \eqref{subeq:1b}) and repeats the operation.

\begin{defn}\label{Definition_Quant_Av}
The system is able to achieve quantized average consensus if, for every $v_j \in \mathcal{V}$, there exists $k_0 \in \mathbb{Z}_+$ so that for every $v_j \in \mathcal{V}$ we have
\begin{equation}\label{alpha_q_no_oscill}
( q^s_j[k] = \lfloor q \rfloor \ \ \text{for} \ \ k \geq k_0 ) \ \ \text{or} \ \ ( q^s_j[k] = \lceil q \rceil \ \ \text{for} \ \ k \geq k_0) ,
\end{equation}
where $q$ is the real average of the initial states defined as: 
\begin{equation}\label{real_av}
q = \frac{\sum_{l=1}^{n}{y_l[0]}}{n} .
\end{equation}
\end{defn}
The following result from~\cite{2021:Rikos_Hadj_Johan} provides an upper bound regarding the number of time steps required for quantized average consensus to be achieved. 

\begin{theorem}[\hspace{-0.00001cm}~\cite{2021:Rikos_Hadj_Johan}]
\label{Conver_Quant_Av}
The iterations in (\ref{Prel_Aver_YZ}) and (\ref{State_Prel_Aver_YZ}) allow the set of nodes to reach quantized average consensus (i.e., state variable $q_j^s$ of each node $v_j \in \mathcal{V}$ fulfils \eqref{alpha_q_no_oscill}) after a finite number of steps. 
Specifically, for any $\varepsilon$, where $0 < \varepsilon < 1$, there exists $k_0 \in \mathbb{Z}_+$, so that with probability $(1-\varepsilon)^{(y^{init}+n)}$ we have
$$ 
( q^s_j[k] = \lfloor q \rfloor \ \ \text{for} \ \ k \geq k_0) \ \ \text{or} \ \ ( q^s_j[k] = \lceil q \rceil \ \ \text{for} \ \ k \geq k_0)  ,
$$
for every $v_j \in \mathcal{V}$, where $q$ fulfills \eqref{real_av} and
\begin{align}
y^{init} & = & \sum_{\{v_j \in \mathcal{V}: y_j[0] > \lceil q \rceil\}} {(y_j[0] - \lceil q \rceil) } \ + \nonumber \\ 
 & & \sum_{\{v_j \in \mathcal{V}: y_j[0] < \lfloor q \rfloor\}} {(\lfloor q \rfloor - y_j[0])} . %
\end{align}
\end{theorem}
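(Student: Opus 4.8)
The plan is to view each pair $(y_j[k], z_j[k])$ as conserved, integer-valued ``mass'' that the rule \eqref{subeq:1a}--\eqref{subeq:1b} transports along edges, and to prove that with the stated probability the network settles, in finitely many steps, into a configuration that is by construction quantized average consensus. First I would record the invariants: the update and the splitting step neither create nor destroy mass, so $\sum_{v_j \in \mathcal{V}} y_j[k] = \sum_{l} y_l[0] = nq$ and $\sum_{v_j \in \mathcal{V}} z_j[k] = n$ hold for all $k$ (taking $z_j[0]=1$), with $q$ as in \eqref{real_av}. I would also note two elementary facts: splitting $y_j$ into $z_j$ as-equal-as-possible integer pieces produces values differing by at most one, and a node with $z_j = 1$ fails the trigger (C1), so it neither transmits nor updates its state.

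Next I would single out the absorbing target $\mathcal{S}^*$ in which every node holds $z_j = 1$ and $y_j \in \{\lfloor q \rfloor, \lceil q \rceil\}$. By the second fact above, no node is triggered in $\mathcal{S}^*$, so it is a fixed point of the dynamics; and since $n$ integers summing to $nq$, each recorded through $q_j^s = \lceil y_j^s / z_j^s \rceil$ with $z_j^s = 1$, can only take the values $\lfloor q \rfloor$ or $\lceil q \rceil$, every node satisfies \eqref{alpha_q_no_oscill}. Hence it suffices to show that $\mathcal{S}^*$ is reached, with the claimed probability, after finitely many steps, after which consensus persists for all $k \geq k_0$.

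The heart of the proof is a constructive favorable-event argument. I would track the $n$ units of $z$-mass together with the $y^{init}$ \emph{misplaced} $y$-units --- those exceeding $\lceil q \rceil$ at over-loaded nodes, equivalently the deficit units missing at under-loaded nodes --- as the tokens that must be rerouted to turn the initial configuration into $\mathcal{S}^*$ (the remaining, already-correct mass can be left essentially in place). Because at every triggered step each transmitted piece lands on any prescribed out-neighbor, or on the node itself, with probability bounded below by $1/(\mathcal{D}_j^+ + 1)$, and because the digraph is strongly connected, the probability that a given token has been driven to its designated slot by step $k_0$ increases to $1$ as $k_0 \to \infty$, and therefore exceeds $1-\varepsilon$ once $k_0$ is large. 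Multiplying the success events of the $y^{init}+n$ tokens yields the advertised lower bound $(1-\varepsilon)^{(y^{init}+n)}$, and absorption of $\mathcal{S}^*$ delivers the dichotomy in the statement.

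The main obstacle is making this favorable event precise while keeping the exponent exactly $y^{init}+n$. Three points need care: reconciling the bounded-below single-hop probabilities (whose product over one window is small) with the claimed per-token bound $1-\varepsilon$, which rests on a recurrence argument showing that each token's walk reaches and then remains in its slot over the long horizon $k_0$; controlling the coupling between tokens introduced by the shared splitting step, so that their joint success genuinely factorizes into the product form rather than a looser bound; and verifying that the redistributed unit tokens land exactly one per node, so that no node is re-triggered and $\mathcal{S}^*$ truly freezes. Quantifying $k_0$ itself would then follow from standard hitting-time and primitivity estimates on the strongly connected digraph, scaling with its diameter $D$.
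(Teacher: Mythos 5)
Your proof rests on an absorbing-state argument that is incompatible with how the algorithm actually operates, and this is a genuine gap rather than a presentational difference. You take $z_j[0]=1$ for every node so that $\sum_j z_j[k]=n$, and you define the target $\mathcal{S}^*$ as the frozen configuration in which every node holds $z_j=1$ and $y_j\in\{\lfloor q\rfloor,\lceil q\rceil\}$. But with $z_j[0]=1$ everywhere the trigger (C1), namely $z_j[k]>1$, fails at every node at $k=0$ and hence at every subsequent step: no node ever splits, transmits, or updates its state variables, so the dynamics are frozen at the initial condition and the averaging claimed by the theorem never takes place. Conversely, if the total $z$-mass exceeds $n$ (which is exactly what makes the algorithm non-trivial; in Algorithm~\ref{algorithm1} it equals $\pi^{\max}$), then by pigeonhole some node always carries $z_j\geq 2$, so a configuration with all $z_j=1$ is never reached and the mass dynamics never freeze. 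In other words, the convergence asserted in Theorem~\ref{Conver_Quant_Av} is \emph{not} absorption of the mass process: the traveling masses keep performing random walks forever, and what converges are the state variables $q_j^s$, which settle once the $y$-value carried by every piece of mass --- the one piece kept at each node and all circulating pieces --- lies in $\{\lfloor q\rfloor,\lceil q\rceil\}$. This is precisely the structure of the paper's own argument (see the proof of Theorem~\ref{Theorem2_Alg2_Converge} in Appendix~\ref{appendix:A}, which adapts the cited proof): one stationary token per node plus traveling tokens that never stop; each visit of a traveling token to a node equilibrates its $y$-value with the stationary one; Lemma~\ref{Lemma1_prob} bounds the probability of a prescribed visit within a window of $n-1$ steps from below by $(1+\mathcal{D}^+_{max})^{-(n-1)}$; and $\tau$ repetitions, with $\tau$ as in \eqref{windows_for_conv_1_no_oscil}, boost each visit's success probability to $1-\epsilon$.

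A second, smaller gap is the product form of the probability. You multiply per-token success probabilities over a common horizon and yourself note that the coupling introduced by the shared splitting step must be controlled for this to factorize; that control is exactly the missing content. The paper sidesteps independence altogether by composing sequentially: time is divided into $(y^{init}+n)$ consecutive phases of $\tau(n-1)$ steps, each phase secures one required visit with conditional probability at least $1-\epsilon$ given the past, and multiplying these conditional bounds gives $(1-\varepsilon)^{(y^{init}+n)}$ together with the explicit horizon $k_0=(y^{init}+n)\tau(n-1)$. (Incidentally, your side remark that $n$ integers summing to $nq$ ``can only take the values $\lfloor q\rfloor$ or $\lceil q\rceil$'' is false in general --- the integers $1,2,3$ sum to $3\cdot 2$ --- though it is harmless where you use it, since membership in $\{\lfloor q\rfloor,\lceil q\rceil\}$ is already built into your definition of $\mathcal{S}^*$.) To repair your argument, replace the absorbing set by the event that all mass pieces carry $y$-values in $\{\lfloor q\rfloor,\lceil q\rceil\}$ --- noting that this event is invariant under further token motion, which is what yields \eqref{alpha_q_no_oscill} for all $k\geq k_0$ --- and replace the simultaneous product over tokens by the sequential phase composition above.
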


\subsection{Synchronous max/min - Consensus}

The $\max$-consensus algorithm computes the maximum value of the network in a finite number of time steps in a distributed fashion \cite{2008:Cortes}. 
For every node $v_{j} \in \mathcal{V}$, if the updates of the node's state are synchronous, then the update rule is:
\begin{align}
x_j[k+1] = \max_{v_{i}\in \mathcal{N}_j^{-} \cup \{v_{j}\}}\{ x_i[k] \}.
\end{align}
It has been shown (see, e.g., \cite[Theorem 5.4]{2013:Giannini}) that the $\max$-consensus algorithm converges to the maximum value among all nodes in a finite number of steps $s$, where $s \leq D$.  
Similar results hold for the $\min$-consensus algorithm.

\section{Quantized CPU Scheduling Algorithm}
\label{sec:distr_algo}

In this section we propose a distributed quantized information exchange algorithm which solves the problem described in Section~\ref{sec:probForm}. 
The proposed algorithm is detailed as Algorithm~\ref{algorithm1} below. 
The distributed algorithm allows each node $v_j$ to calculate the optimal required workload $w^*_j$ shown in \eqref{eq:optimal_workload}, after a finite number of time steps. 
For solving the problem in a distributed way we make the following two assumptions.

\begin{assum}\label{Diam_known}
The diameter of the network $D$ (or an upper bound $D'$) is known to all server nodes $v_j \in \mathcal{V}$.
\end{assum}

\begin{assum}\label{upper_bound_known}
Each server node $v_j \in \mathcal{V}$ has knowledge of an upper bound $\pi^{\mathrm{upper}}$ regarding the total capacity of the network $\pi^{\max}$ (i.e., $\pi^{\mathrm{upper}} \geq \pi^{\max}$, where $\pi^{\max} \coloneqq \sum_{v_{j} \in \mathcal{V}} \pi_j^{\max}$).
\end{assum}

Assumption~\ref{Diam_known} is necessary for coordinating the $\min$- and $\max$-consensus algorithm, such that each node $v_j$ is able to determine whether convergence has been achieved and thus the operation of our proposed algorithm needs to be terminated.

Assumption~\ref{upper_bound_known} is made such that our proposed algorithm allows each node $v_j$ to calculate the correct optimal required workload $w^*_j$ in a finite number of time steps via exchanging quantized information with its neighbors. 
Specifically, each node $v_j$ needs to know $\pi^{\mathrm{upper}}$ (where $\pi^{\mathrm{upper}} \geq \pi^{\max}$) in order to multiply its initial value $y_j[0]$ with $\pi^{\mathrm{upper}}$ so that $y_j[0] > z_j[0]$ (here $z_j[0]$ is a variable used by node $v_j$ to process the value of $y_j[0]$ as it will be seen later in the proposed algorithm). 
Guaranteeing that $y_j[0] > z_j[0]$ is necessary during the operation of our algorithm, so that each node $v_j$ is able to split $y_j[k]$ into $z_j[k]$ equal integer pieces (or with maximum difference between them equal  to $1$) at every time step $k \in \mathbb{N}$.

\begin{varalgorithm}{1}
\caption{Quantized CPU Scheduling Algorithm}
\noindent \textbf{Input:} A strongly connected digraph $\mathcal{G}_d = (\mathcal{V}, \mathcal{E})$ with $n=|\mathcal{V}|$ nodes and $m=|\mathcal{E}|$ edges. 
Each node $v_j\in \mathcal{V}$ has knowledge of $l_j, u_j, D, \pi^{\mathrm{upper}}, \pi_j^{\max} \in \mathbb{Z}$. \\
\textbf{Initialization:} Each node $v_j \in \mathcal{V}$ does the following: 
\begin{list4}
\item[$1)$] Assigns a nonzero probability $b_{lj}$ to each of its outgoing edges $m_{lj}$, where $v_l \in \mathcal{N}^+_j \cup \{v_j\}$, as follows
\begin{align*}
b_{lj} = \left\{ \begin{array}{ll}
         \frac{1}{1 + \mathcal{D}_j^+}, & \mbox{if $l = j$ or $v_{l} \in \mathcal{N}_j^+$,}\\
         0, & \mbox{if $l \neq j$ and $v_{l} \notin \mathcal{N}_j^+$.}\end{array} \right. 
\end{align*} 
\item[$2)$] Sets $y_j[0] := \pi^{\mathrm{upper}} (l_j + u_j)$, $z_j[0] = \pi_j^{\max}$, and $\text{flag}_j = 0$. 
\end{list4} 
\textbf{Iteration:} For $k=1,2,\dots$, each node $v_j \in \mathcal{V}$, does the following: 
\begin{list4} 
\item \textbf{while} $\text{flag}_j = 0$ \textbf{then} 
\begin{list4a}
\item[$1)$] \textbf{if} $k \mod D = 1$ \textbf{then} sets $M_j = \lceil y_j[k] / z_j[k] \rceil$, $m_j = \lfloor y_j[k] / z_j[k] \rfloor$; 
\item[$2)$] broadcasts $M_j$, $m_j$ to every $v_{l} \in \mathcal{N}_j^+$; 
\item[$3)$] receives $M_i$, $m_i$ from every $v_{i} \in \mathcal{N}_j^-$; 
\item[$4)$] sets $M_j = \max_{v_{i} \in \mathcal{N}_j^-\cup \{ v_j \}} \ M_i$, $m_j = \min_{v_{i} \in \mathcal{N}_j^-\cup \{ v_j \}} \ m_i$; 
\item[$5)$] \textbf{if} $z_j[k] > 1$, \textbf{then}
\begin{list4a}
\item[$5.1)$] sets $z^s_j[k] = z_j[k]$, $y^s_j[k] = y_j[k]$, 
$
q^s_j[k] = \Bigl \lceil \frac{y^s_j[k]}{z^s_j[k]} \Bigr \rceil \ ;
$
\item[$5.2)$] sets (i) $mas^y[k] = y_j[k]$, $mas^z[k] = z_j[k]$; (ii) $c^{y}_{lj}[k] = 0$, $c^{z}_{lj}[k] = 0$, for every $v_l \in \mathcal{N}^+_j \cup \{v_j\}$; (iii) $\delta = \lfloor mas^y[k] / mas^z[k] \rfloor$, $mas^{rem}[k]= y_j[k] - \delta \ mas^z[k]$;  
\item[$5.3)$] \textbf{while} $mas^z[k] > 1$, \textbf{then} 
\begin{list4a}
\item[$5.3a)$] chooses $v_l \in \mathcal{N}^+_j \cup \{v_j\}$ randomly according to $b_{lj}$; 
\item[$5.3b)$] sets (i) $c^{z}_{lj}[k] := c^{z}_{lj}[k] + 1$, $c^{y}_{lj}[k] := c^{y}_{lj}[k] + \delta$; (ii) $mas^z[k] := mas^z[k] - 1$, $mas^y[k] := mas^y[k] - \delta$. 
\item[$5.3c)$] If $mas^{rem}[k] > 1$, sets $c^{y}_{lj}[k] := c^{y}_{lj}[k] + 1$, $mas^{rem}[k] := mas^{rem}[k]- 1$; 
\end{list4a}
\item[$5.4)$] sets $c^{y}_{jj}[k] := c^{y}_{jj}[k] + mas^y[k]$, $c^{z}_{jj}[k] := c^{z}_{jj}[k] + mas^z[k]$; 
\item[$5.5)$] for every $v_l \in \mathcal{N}^+_j$, if $c^{z}_{lj}[k] > 0$ transmits $c^{y}_{lj}[k]$, $c^{z}_{lj}[k]$ to out-neighbor $v_l$; 
\end{list4a}
\item \textbf{else if} $z_j[k] \leq 1$, sets $c^{y}_{jj}[k] = y[k]$, $c^{z}_{jj}[k] = z[k]$; 
\item[$6)$] receives $c^{y}_{ji}[k]$, $c^{z}_{ji}[k]$ from $v_i \in \mathcal{N}_j^-$ and sets 
\begin{equation}\label{no_del_eq_y_no_oscil}
y_j[k+1] = c^{y}_{jj}[k] + \sum_{v_i \in \mathcal{N}_j^-}  w_{ji}[k] \ c^{y}_{ji}[k] ,
\end{equation} 
\begin{equation}\label{no_del_eq_z_no_oscil}
z_j[k+1] = c^{z}_{jj}[k] + \sum_{v_i \in \mathcal{N}_j^-} w_{ji}[k] \ c^{z}_{ji}[k] ,
\end{equation}
where $w_{ji}[k] = 1$ if node $v_j$ receives $c^{y}_{ji}[k]$, $c^{z}_{ji}[k]$ from $v_i \in \mathcal{N}_j^-$ at iteration $k$ (otherwise $w_{ji}[k] = 0$); 
\item[$7)$] \textbf{if} $k \mod D = 0$ \textbf{then}, \textbf{if} $M_j - m_j \leq 1$ \textbf{then} sets $w_j^* = \lceil q^s_j[k] (\pi_j^{\max} / \pi^{\mathrm{upper}}) \rceil$ and $\text{flag}_j = 1$.
\end{list4a}
\end{list4}
\textbf{Output:} \eqref{cond:balance} holds for every $v_j \in \mathcal{V}$. 
\label{algorithm1}
\end{varalgorithm}

\begin{remark}
It is interesting to note here that Algorithm~\ref{algorithm1} is based on similar principles as the algorithm presented in \cite{2015:Cady}, which executes the ratio-consensus algorithm \cite{2010:christoforos} along with $\min-$ and $\max-$consensus iterations \cite{2008:Cortes}. 
Specifically, during the operation in \cite{2010:christoforos}, each node maintains two real valued variables and updates them by executing two parallel iterations. Then, each node is able to calculate the real average of the initial states asymptotically as the ratio of these two variables. 
Furthermore, by performing $\min$- and $\max$-consensus \cite{2008:Cortes} every $D$ time steps, each node is able to determine during which time step $k_0$ its state is within $\varepsilon$ to the state of every other node (i.e., their difference is less or equal to $\varepsilon$). 
Overall, \cite{2015:Cady} allowed the nodes in the network to calculate the real average of their initial states and then terminate their operation according to a distributed stopping criterion. 
Nevertheless, compared to \cite{2015:Cady}, Algorithm~\ref{algorithm1} has significant differences due to its quantized nature. 
These differences mainly focus on (i) the underlying process for calculating the quantized average of the initial states via the exchange of quantized messages, and (ii) the distributed stopping mechanism designed explicitly for quantized information exchange algorithms.  
Specifically, during the operation of Algorithm~\ref{algorithm1}, the underlying process for calculating the quantized average of the initial states is based on \cite{2021:Rikos_Hadj_Johan}. 
This means that each node maintains two integer valued variables and updates them by executing two parallel iterations, where it splits them into integer equal pieces (or with maximum difference equal to $1$) and transmits them to randomly chosen out-neighbors. 
Then, each node calculates the quantized average of the initial states in a finite number of time steps as the ceiling of the ratio of these two variables. 
Furthermore, the distributed stopping mechanism is based on performing $\min$- and $\max$-consensus every $D$ time steps, where the $\min$- and $\max$-values are initialized as the floor and the ceiling of the ratio of the two integer valued variables it maintains. 
The $\min$- and $\max$-consensus operation converges once the $\min$-values are within $1$ of the $\max$-values (i.e., their difference is less or equal to $1$) which means that the state of every node is within $1$ to the state of every other node. 
As a result, Algorithm~\ref{algorithm1} allows the nodes to calculate the quantized average of the initial states and, by utilizing the distributed stopping mechanism, to determine whether convergence has been achieved, and, thus whether the operation can be terminated. 
\end{remark}

Next, we show that, during the operation of Algorithm~\ref{algorithm1}, each node $v_j$ is able to (i)  calculate the optimal required workload $w^*_j$ (shown in \eqref{eq:optimal_workload}) after a finite number of time steps, and (ii) after calculating $w^*_j$ terminate its operation.

\begin{theorem}
\label{thm:main}
Consider a strongly connected digraph $\mathcal{G}_d = (\mathcal{V}, \mathcal{E})$ with $n=|\mathcal{V}|$ nodes and $m=|\mathcal{E}|$ edges and $y_j[0] = \pi^{\mathrm{upper}} (l_j + u_j)$, $z_j[0] = \pi_j^{\max}$ where $l_j, u_j, \pi^{\mathrm{upper}}, \pi_j^{\max} \in \mathbb{N}$ for every node $v_j \in \mathcal{V}$ at time step $k=0$. 
Suppose that each node $v_j \in \mathcal{V}$ follows the Initialization and Iteration steps as described in Algorithm~\ref{algorithm1}.
For any $\varepsilon$, where $0 < \varepsilon < 1$, there exists $k_0 \in \mathbb{N}$, so that for each node $v_j$ it holds 
$$
w_j^* = \lceil q^{\mathrm{tasks}} (\pi_j^{\max} / \pi^{\mathrm{upper}}) \rceil = \frac{\rho + u_{\mathrm{tot}}}{\pi^{\max}} \pi_i^{\max} - u_{i} , 
$$
with probability $(1-\varepsilon)^{(y^{init}+n)}$ where 
\begin{equation}\label{initial_dist_processors}
q^{\mathrm{tasks}} = \pi^{\mathrm{upper}} \frac{\sum_{v_{j} \in \mathcal{V}} (l_j + u_j)}{\sum_{v_{j} \in \mathcal{V}} \pi_j^{\max}} , 
\end{equation} 
and 
\begin{align}
y^{init} & = & \sum_{\{v_j \in \mathcal{V}: y_j[0] > \lceil q^{\mathrm{tasks}} \rceil\}} {(y_j[0] - \lceil q^{\mathrm{tasks}} \rceil) } \ + \nonumber \\ 
 & & \sum_{\{v_j \in \mathcal{V}: y_j[0] < \lfloor q^{\mathrm{tasks}} \rfloor\}} {(\lfloor q^{\mathrm{tasks}} \rfloor - y_j[0])} , \label{initial_distance_no_oscill}
\end{align}
is the total initial state error (i.e., $y^{init}$ is the sum of the differences between (i) the value $\lceil q^{\mathrm{tasks}} \rceil$ and the initial state $y_j[0]$ of each node $v_j$ that has an initial state higher than the ceiling of $q^{\mathrm{tasks}}$ and (ii) the value $\lfloor q^{\mathrm{tasks}} \rfloor$ and the initial state $y_j[0]$ of each node $v_j$ that has an initial state less than the floor of $q^{\mathrm{tasks}}$). 

This means that each node $v_j$ is able to (i) calculate the optimal required workload $w^*_j$ (shown in \eqref{eq:optimal_workload}) after a finite number of time steps $k_0$ with probability $(1-\varepsilon)^{(y^{init}+n)}$ and (ii) after calculating $w^*_j$, terminate its operation. 
\end{theorem}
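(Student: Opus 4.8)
The plan is to decompose the statement into three parts that mirror the structure of Algorithm~\ref{algorithm1}: (A) the underlying quantized ratio‑consensus iteration drives every node's state to the floor or ceiling of $q^{\mathrm{tasks}}$; (B) the $\min$/$\max$‑consensus overlay correctly and distributedly detects this convergence and triggers termination; and (C) the rescaling in step~$7$ converts the converged quantized value into the exact optimal workload of \eqref{eq:optimal_workload}. The backbone of the whole argument is a mass‑conservation invariant: since at each iteration node $v_j$ only partitions its current $y_j[k]$ and $z_j[k]$ into pieces that are sent to out‑neighbours or kept to itself (steps~$5.2$–$5.5$), the total masses $\sum_{v_j\in\mathcal{V}} y_j[k]$ and $\sum_{v_j\in\mathcal{V}} z_j[k]$ are preserved for all $k$ and equal $\pi^{\mathrm{upper}}\sum_{v_j}(l_j+u_j)$ and $\sum_{v_j}\pi_j^{\max}=\pi^{\max}$, respectively. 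I would establish this first, by induction on $k$ using \eqref{no_del_eq_y_no_oscil}–\eqref{no_del_eq_z_no_oscil}.

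For part (A), I would observe that initialising $z_j[0]=\pi_j^{\max}$ rather than $z_j[0]=1$ turns the target of the consensus from the plain average into the $z$‑weighted ratio $\big(\sum_{v_j} y_j[0]\big)/\big(\sum_{v_j} z_j[0]\big)$, which by the invariant equals exactly $q^{\mathrm{tasks}}$ of \eqref{initial_dist_processors}. Steps~$1$, $5$ and $6$ are precisely the iteration \eqref{Prel_Aver_YZ}–\eqref{State_Prel_Aver_YZ} of \cite{2021:Rikos_Hadj_Johan}, so I would invoke Theorem~\ref{Conver_Quant_Av} verbatim (with $q$ replaced by $q^{\mathrm{tasks}}$) to conclude that, for any $0<\varepsilon<1$, there is a finite $k_0$ after which every node satisfies $q^s_j[k]=\lfloor q^{\mathrm{tasks}}\rfloor$ for all $k\ge k_0$ or $q^s_j[k]=\lceil q^{\mathrm{tasks}}\rceil$ for all $k\ge k_0$, with probability $(1-\varepsilon)^{(y^{init}+n)}$ and $y^{init}$ given by \eqref{initial_distance_no_oscill}. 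Crucially the convergence is to a \emph{stable} value (no oscillation), a fact part (B) relies on.

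For part (B) I would analyse the overlay reset at $k \bmod D=1$ (step~$1$) and tested at $k \bmod D=0$ (step~$7$). Because $\mathcal{G}_d$ is strongly connected with diameter $D$ (Assumptions~\ref{str_digr} and~\ref{Diam_known}), the synchronous $\max$‑ and $\min$‑consensus of step~$4$ reach the global extrema in at most $D$ steps; hence at the end of each length‑$D$ block every node holds $M_j=\max_{v_i}\lceil y_i/z_i\rceil$ and $m_j=\min_{v_i}\lfloor y_i/z_i\rfloor$ evaluated at the block's start. \textbf{Soundness.} If $M_j-m_j\le 1$, then all per‑node ratios lie in a unit‑width integer interval $[m_j,m_j+1]$; combined with the invariant $\big(\sum_{v_j} y_j[k]\big)/\big(\sum_{v_j} z_j[k]\big)=q^{\mathrm{tasks}}$ (so $q^{\mathrm{tasks}}$ is a $z$‑weighted convex combination of those ratios) this forces every $q^s_j$ to equal $\lfloor q^{\mathrm{tasks}}\rfloor$ or $\lceil q^{\mathrm{tasks}}\rceil$, i.e. the detected value is the true quantized average and termination is never premature. \textbf{Completeness.} Once $k\ge k_0$ from part (A) the states are stable and within $1$ of one another, so the first block boundary after $k_0$ satisfies $M_j-m_j\le 1$ and every node sets $\text{flag}_j=1$. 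This synchronisation argument — certifying convergence of a process that keeps evolving inside the block using only the values frozen at the block's start — is the step I expect to be the main obstacle.

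Finally, for part (C), after termination node $v_j$ holds $q^s_j[k]\in\{\lfloor q^{\mathrm{tasks}}\rfloor,\lceil q^{\mathrm{tasks}}\rceil\}$ and step~$7$ returns $\big\lceil q^s_j[k]\,(\pi_j^{\max}/\pi^{\mathrm{upper}})\big\rceil$. I would verify the identity $\big\lceil q^s_j[k]\,(\pi_j^{\max}/\pi^{\mathrm{upper}})\big\rceil=\big\lceil q^{\mathrm{tasks}}\,(\pi_j^{\max}/\pi^{\mathrm{upper}})\big\rceil$, i.e. that the floor/ceiling ambiguity left by part (A) is annihilated by the rescaling; this is exactly the purpose of pre‑multiplying the initial value by $\pi^{\mathrm{upper}}\ge\pi^{\max}$ (Assumption~\ref{upper_bound_known}), which provides enough numerical resolution that dividing back by $\pi^{\mathrm{upper}}$ and rounding recovers a unique integer. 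Substituting \eqref{initial_dist_processors} then collapses the expression to the closed form $\tfrac{\rho+u_{\mathrm{tot}}}{\pi^{\max}}\pi_i^{\max}-u_i$ of \eqref{eq:optimal_workload}, so that \eqref{cond:balance} holds at every node, which completes the proof.
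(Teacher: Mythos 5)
Your three-part decomposition coincides with the structure of the paper's own argument: your part (A) is the paper's Theorem~\ref{Theorem2_Alg2_Converge}, your part (B) is the $\min$/$\max$ overlay argument in Appendix~\ref{appendix:A}, and your part (C) is the final rescaling step. Two remarks before the main objection. First, you cannot invoke Theorem~\ref{Conver_Quant_Av} ``verbatim with $q$ replaced by $q^{\mathrm{tasks}}$'': that theorem is stated for the plain average $q=\sum_l y_l[0]/n$ (unit $z$-initialization), whereas Algorithm~\ref{algorithm1} sets $z_j[0]=\pi_j^{\max}$. This is precisely why the paper re-derives the convergence statement as Theorem~\ref{Theorem2_Alg2_Converge}, modelling the dynamics as $\pi^{\max}-n$ mobile tokens plus $n$ stationary ones and using the random-walk bound of Lemma~\ref{Lemma1_prob} to produce both the probability $(1-\varepsilon)^{(y^{init}+n)}$ and the explicit time bound $(y^{init}+n)\tau(n-1)$ that the statement asserts. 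Your mass-conservation invariant identifies the correct target, but the adapted token argument is the content of the proof, not a citation. Second, your soundness half of (B) (termination is never premature) genuinely goes beyond the paper, whose proof only establishes completeness: after the token values converge, the next re-initialization of $M_j,m_j$ plus one propagation window of $D$ steps forces $M_j-m_j\le 1$ everywhere.

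The step that would fail is (C). The identity $\bigl\lceil q^s_j[k]\,(\pi_j^{\max}/\pi^{\mathrm{upper}})\bigr\rceil=\bigl\lceil q^{\mathrm{tasks}}\,(\pi_j^{\max}/\pi^{\mathrm{upper}})\bigr\rceil$ is false in general, because different nodes may stabilize at $\lfloor q^{\mathrm{tasks}}\rfloor$ or at $\lceil q^{\mathrm{tasks}}\rceil$, a gap of $1$ that rescaling by $\pi_j^{\max}/\pi^{\mathrm{upper}}\le 1$ does not annihilate: the rescaled images can straddle an integer. Concretely, take two nodes with $\pi_1^{\max}=2$, $\pi_2^{\max}=3$ (so $\pi^{\max}=5$), $\pi^{\mathrm{upper}}=6$, $l_1+u_1=2$, $l_2+u_2=0$; then $q^{\mathrm{tasks}}=12/5$, and after convergence the five tokens carry values $\{3,3,2,2,2\}$. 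With positive probability node $v_2$'s stationary token and a visiting mobile token both carry the value $2$, giving $q^s_2=\lceil 4/2\rceil=2=\lfloor q^{\mathrm{tasks}}\rfloor$, so step~$7$ outputs $w_2^*=\lceil 2\cdot(3/6)\rceil=1$, whereas $\lceil q^{\mathrm{tasks}}\cdot(3/6)\rceil=\lceil 6/5\rceil=2$. (Neither value equals the statement's right-hand side $(\rho+u_{\mathrm{tot}})\pi_2^{\max}/\pi^{\max}-u_2=6/5$, which is not even an integer here, so the theorem's second equality can at best hold up to rounding.) Moreover, the purpose of pre-multiplying by $\pi^{\mathrm{upper}}$ is not the ``numerical resolution'' you attribute to it; as the paper explains after Assumption~\ref{upper_bound_known}, it only guarantees $y_j[0]\ge z_j[0]$ so that the integer splitting is well defined. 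You should be aware that the paper's own proof silently makes the same leap --- it asserts $w_j^*=\lceil q^{\mathrm{tasks}}(\pi_j^{\max}/\pi^{\mathrm{upper}})\rceil$ for every node without distinguishing floor-valued from ceiling-valued states --- so your proposal usefully makes explicit a gap the paper glosses over, but your plan to ``verify the identity'' cannot succeed as stated; the conclusion must either be weakened to allow a one-quantum discrepancy between nodes, or supported by an argument controlling which of the two values each node ends at.
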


\begin{proof}
See Appendix~\ref{appendix:A}.
\end{proof}

\section{Simulation Results} \label{sec:results}

In this section, we present simulation results to illustrate the behavior of our proposed distributed algorithm.
In the first section, we present a random graph of $200$ nodes and show how the states of the node converge.
In the second section, we present a more quantitative analysis over a larger set of network sizes which would be more applicable to practical deployments, such as in modern data-centers.
To the best of our knowledge, this is the first work that tries to tackle the problem of converging using quantized values at that scale while also providing a thorough evaluation accompanied with strong theoretical guarantees.
All experiments were performed on a workstation using an AMD 3970X CPU with 32 cores, 256GB of 3600 MHz DDR4 RAM, and MatLab R2021a (build 9.10.0.1602886).
To foster reproducibility, all of the code, datasets, and experiments will be made publicly available.\footnote{\url{https://github.com/andylamp/federated-quantized-ratio-consensus}}.

\subsection{Evaluation over a Small Scale Network}
\label{Small_Network_Simulation}
In this section we show how the states of the nodes converge during the iteration.
The network is large enough to provide valuable insights, yet able to be visualised concretely.
The network in this example comprised $200$ nodes and was randomly generated (an edge between a pair of nodes exists with probability $0.5$).
This process resulted in a digraph that had a diameter equal to $2$. 
Small digraph diameters are indicative on data-center topologies and are normally preferred due to their locality and the benefit of having few hops between each node~\cite{besta2014slim}.
The upper bound $\pi^{\mathrm{upper}}$ of the total capacity is $1000$ and the workload $l_j$ of each node $v_j$ was generated using a random distribution uniformly picked within the range $[1, 100]$. 
The node capacities $\pi_j^{\max}$ in this experiment were set to either $100$ or $300$ for even and odd node numbers respectively.
Our simulation results are shown in Fig.~\ref{eval-small-example}, which depicts the load per n
ode according to its processing capacity.
We can see that the network converges monotonically within a few iterations without being affected by value oscillations or ambiguities.

\begin{figure}[t]
    \centering
    \includegraphics[width=\linewidth]{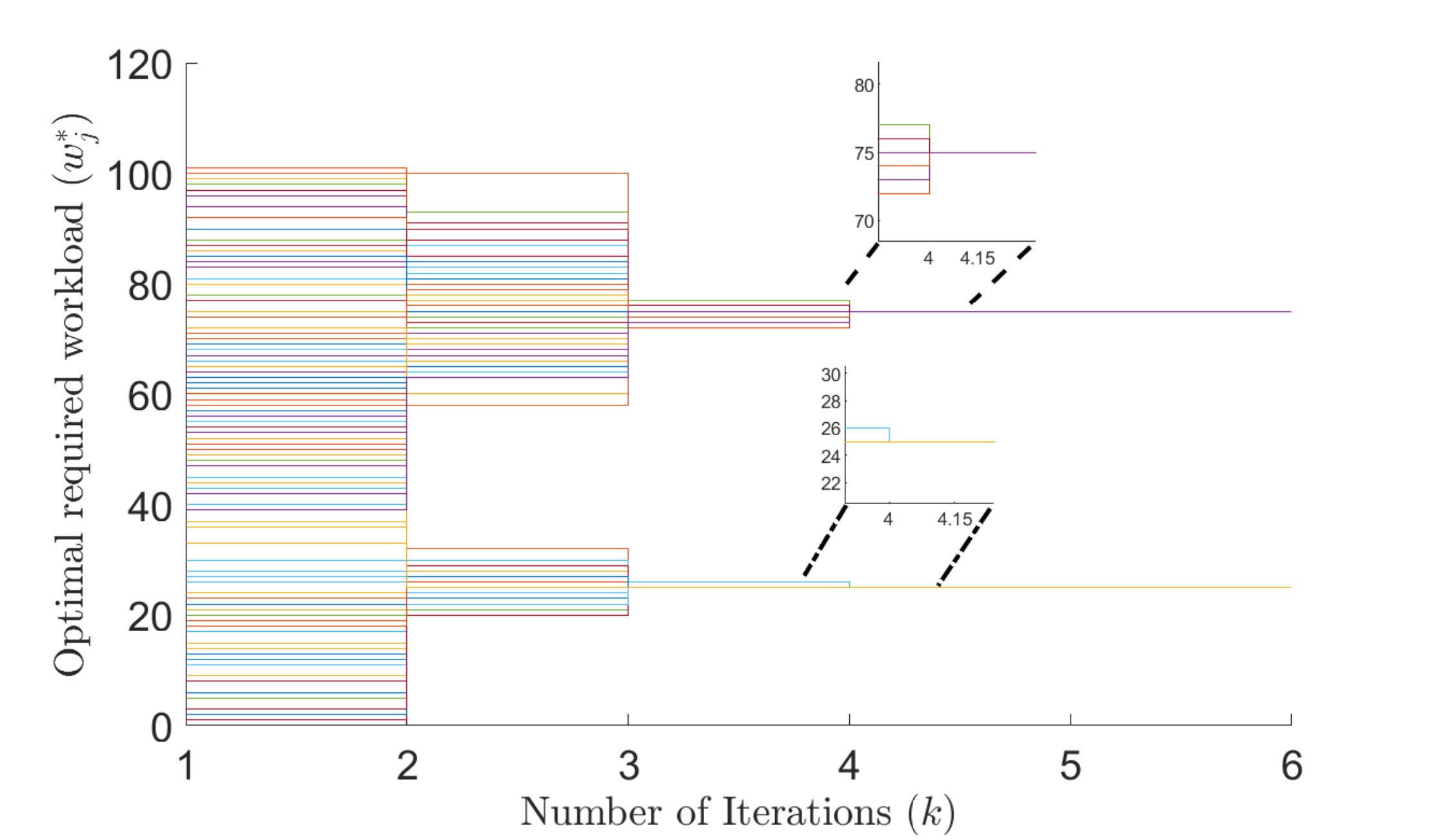}
    \caption{Execution of Algorithm~\ref{algorithm1} over a random network comprised of $200$ nodes having a diameter equal to $2$. We see that the network converges in less than $10$ iterations while having no oscillations. }
    \label{eval-small-example}
\end{figure}

\subsection{Data Center Scale Evaluation}
\label{Data_Center_Simulation}

Our previous analysis dealt with a quantitative example showing the weights for all nodes involved across all iterations.
Here, we present a large scale evaluation of networks over a wide gamut of sizes.
Concretely, we evaluate our proposed scheme on networks sized from $20$ nodes up to $10000$ nodes.
The topologies are randomly generated and result in digraphs that have a diameter from $2$ to $10$.
As we previously mentioned, such digraph diameters are indicative of practical data-center deployments.
We evaluated each network size across $50$ trials and the aggregated values were averaged out before plotting.
The upper bound of the total capacity $\pi^{\mathrm{upper}}$ for all trials was set to $1000$ and the workloads were generated similarly to the previous example.

We start by presenting the iterations required for all of these networks to converge; these results are shown in Fig.~\ref{iterations-to-converge}.
We can see that across \textit{all} network sizes our scheme required less than $40$ iterations to converge.
Another interesting observation is that as network sizes grow, the number of iterations to converge \textit{drops}.
However, as we will see later this does not necessarily mean an improvement in the overall runtime of the algorithm.

\begin{figure}[t]
    \centering
    \includegraphics[width=\linewidth]{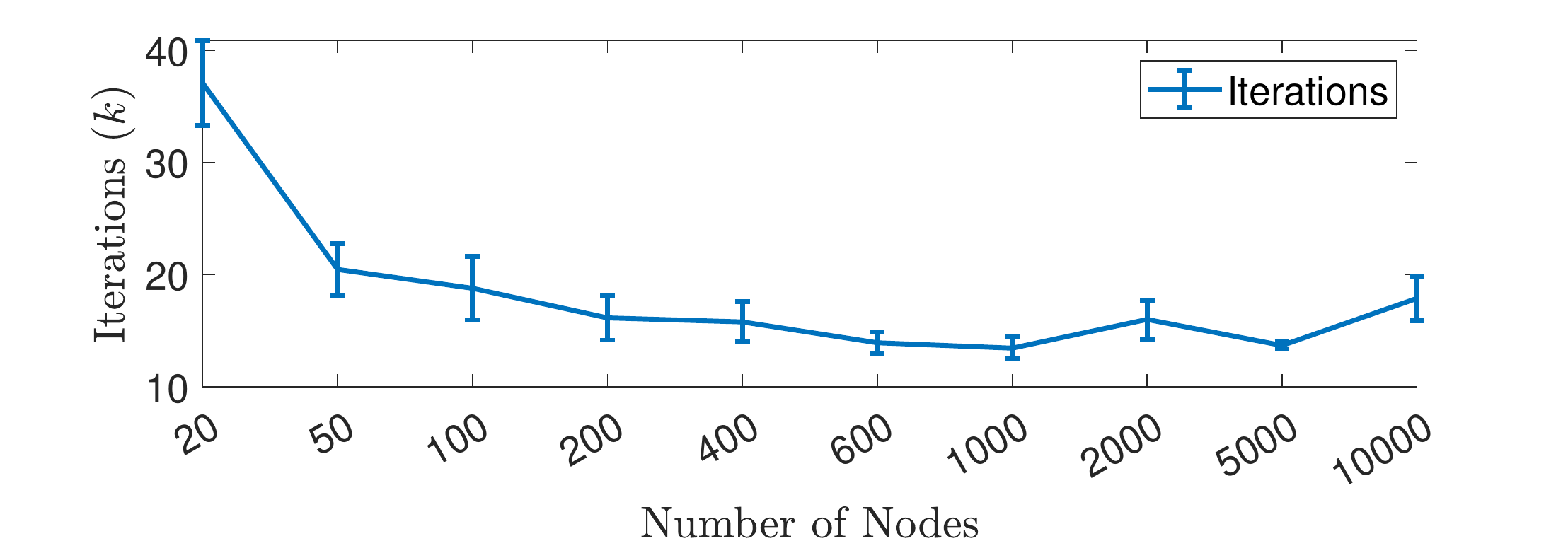}
    \caption{Required iterations for convergence of different network sizes during the operation of Algorithm~\ref{algorithm1} along with their error bars. Each network size is evaluated across $50$ trials and the aggregated values were averaged out before plotting.}
    \label{iterations-to-converge}
\end{figure}

Next, we present the actual time it took for our algorithm to converge.
In the previous digraph we showed the number of iterations required to converge, however that does not tell the whole story. 
In larger networks each iteration takes \textit{longer} as the amount of nodes to communicate increases; hence, prolonging the duration of each round.
This can be very evidently seen in the absolute time it took for the trials to complete for smaller and larger networks, the results of which are presented in Fig.~\ref{time-to-converge}.

\begin{figure}[t]
    \centering
    \includegraphics[width=\linewidth]{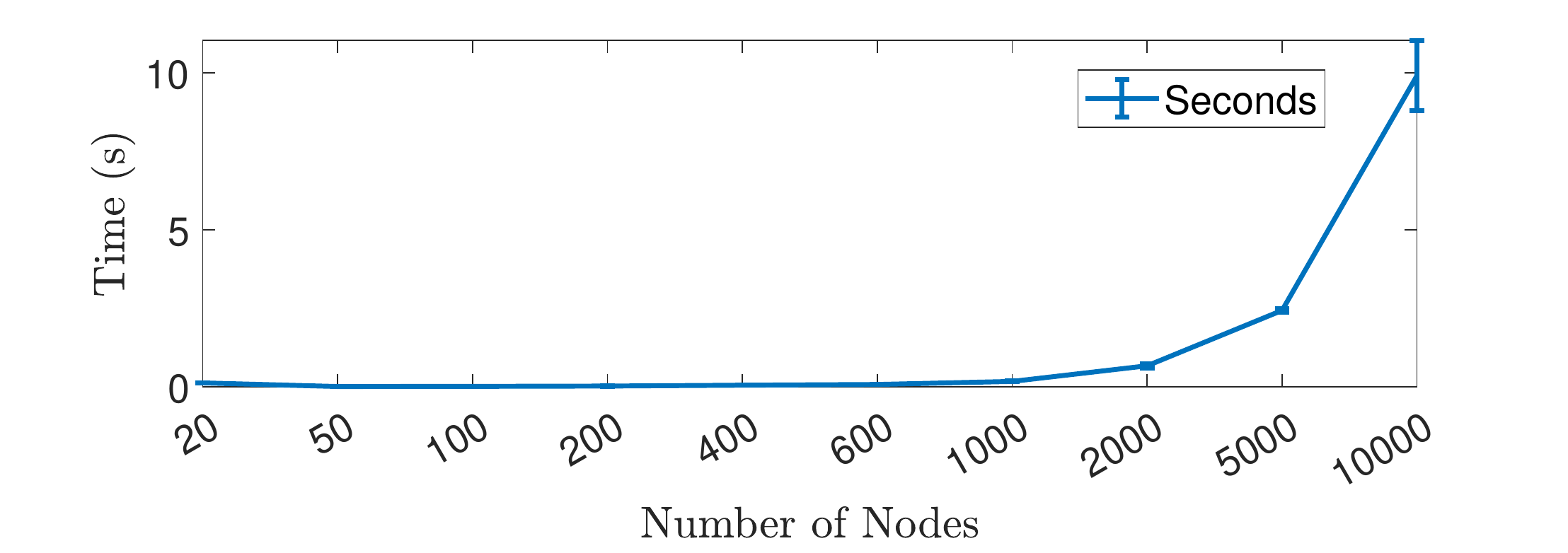}
    \caption{Required time, in seconds, for Algorithm~\ref{algorithm1} to converge along with error bars. 
    The plot is averaged over $50$ trials for each network size. We see that networks below $1000$ nodes converge in less than a second.}
    \label{time-to-converge}
\end{figure}

We note, however, that practically our scheme for networks up to a thousand nodes converges fast (in \textit{less than a second}).
We now move to a more qualitative evaluation of the performance presenting the aggregated converge statistics in Fig.~\ref{converge-stats}.
We present the iteration in which the first node converged within the network, even if some are still divergent.
Additionally, we track the max which is when the last node of the network converged.
Interestingly, their window, which is the absolute difference between max and min, and the average converge iteration shows us how the network behaves as it grows in size.
We can see that as the network size grows the mean tends to level out with the min curve, leaving a few divergent nodes that need more iterations to converge.
Another revealing observation, is that for network sizes larger than $50$ nodes the mean iterations for convergence is \textit{less than $10$}. 
These findings could help understand how these networks behave across different parameters, topologies, and workloads also offering valuable insights for practical deployment.

\begin{figure}[t]
    \centering
    \includegraphics[width=\linewidth]{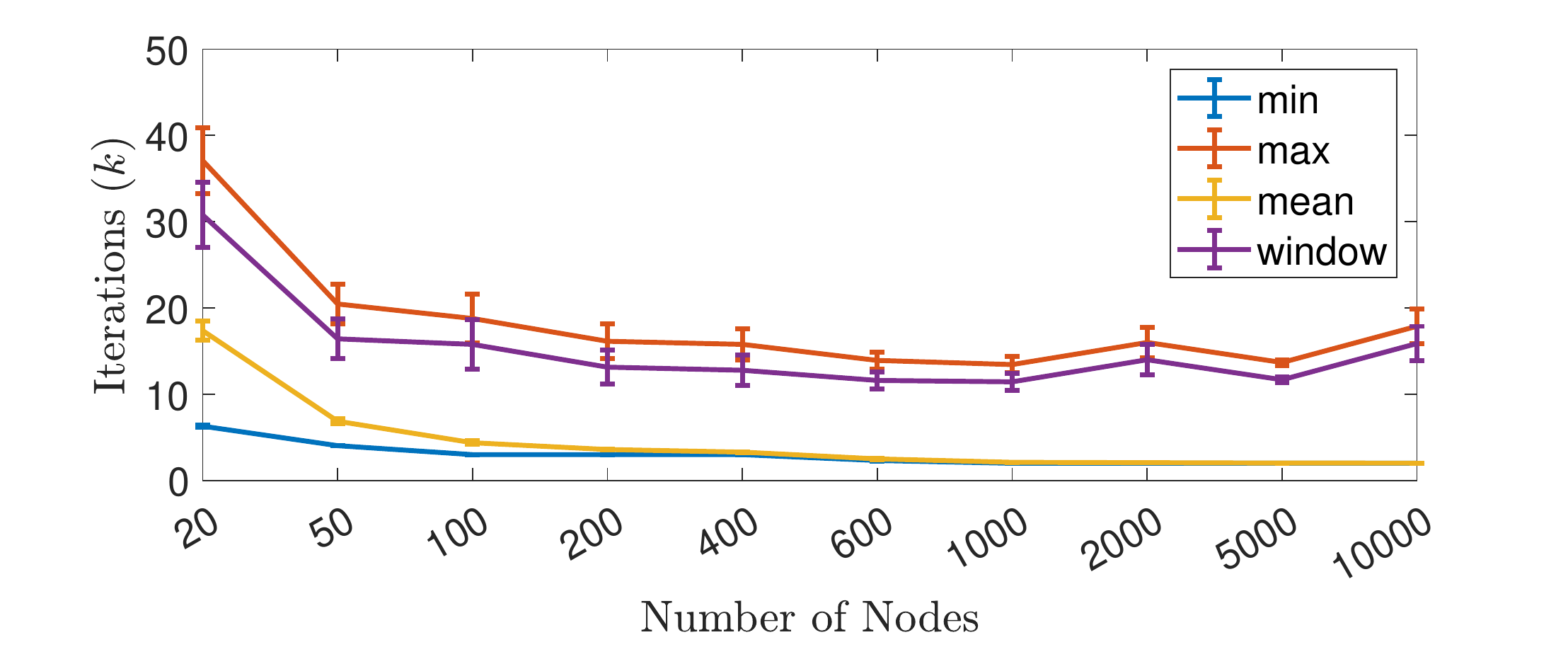}
    \caption{Convergence statistics across different network sizes from $20$ nodes up to $10000$ nodes along with their error bars. The plots are evaluated over $50$ trials and the aggregate values were averaged out before plotting.
    The min indicates when the \textit{first} node in a network converges to its final value.
    The max shows when the \textit{last} node in the network converged and by extension signals the termination of our algorithm.
    The mean shows the average number of iterations required for all of the nodes converge while the window presents the average difference between the minimum and maximum converge values. 
    }
    \label{converge-stats}
\end{figure}

\section{Conclusions and Future Directions}\label{sec:conclusions}

In this paper, we have considered the problem of optimal task scheduling for data centers. 
We proposed a fast distributed iterative algorithm which operates over a large scale network and allows each of the interconnected nodes to reach agreement in a finite number of time steps. 
In the context of task scheduling, we showed that our algorithm converges to the exact optimal scheduling plan in a finite number of time steps and then it exhibits its distributed stopping capability. 
Furthermore, the operation of our algorithm is event-based and relies on the exchange of quantized values between nodes in the network. 
Finally, we have demonstrated the performance of our proposed algorithm and by using extensive empirical evaluations, we have shown the algorithm's fast convergence.

Although the quantized algorithm works in an asynchronous manner as well and converges in a finite number of steps, a termination mechanism should be deployed that would allow a fully asynchronous coordination for solving the optimization problem, which is desirable in large-scale networks.

\appendices
\section{Proof of Theorem~\ref{thm:main}}
\label{appendix:A}

We first consider Lemma~\ref{Lemma1_prob}, which is necessary for our subsequent development. 

\begin{lemma}[\hspace{-0.00001cm}\cite{2020:RikosHadj_IFAC}]
\label{Lemma1_prob}
Consider a strongly connected digraph $\mathcal{G}_d = (\mathcal{V}, \mathcal{E})$ with $n=|\mathcal{V}|$ nodes and $m=|\mathcal{E}|$ edges.
Suppose that each node $v_j$ assigns a nonzero probability $b_{lj}$ to each of its outgoing edges $m_{lj}$, where $v_l \in \mathcal{N}^+_j \cup \{v_j\}$, as follows  
\begin{align*}
b_{lj} = \left\{ \begin{array}{ll}
         \frac{1}{1 + \mathcal{D}_j^+}, & \mbox{if $l = j$ or $v_{l} \in \mathcal{N}_j^+$,}\\
         0, & \mbox{if $l \neq j$ and $v_{l} \notin \mathcal{N}_j^+$.}\end{array} \right. 
\end{align*}
At time step $k=0$, node $v_j$ holds a ``token" while the other nodes $v_l \in \mathcal{V} - \{ v_j \}$ do not. 
Each node $v_j$ transmits the ``token'' (if it has it, otherwise it performs no transmission) according to the nonzero probability $b_{lj}$ it assigned to its outgoing edges $m_{lj}$. 
The probability $P^{n-1}_{T_i}$ that the token is at node $v_i$ after $n-1$ time steps satisfies 
$$
P^{n-1}_{T_i} \geq (1+\mathcal{D}^+_{max})^{-(n-1)} > 0 , 
$$
where $\mathcal{D}^+_{max} = \max_{v_j \in \mathcal{V}} \mathcal{D}^+_{j}$. 
\end{lemma}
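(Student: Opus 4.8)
The plan is to lower-bound the probability that the token is located at $v_i$ after exactly $n-1$ steps by isolating a single favorable trajectory and bounding its probability from below. Since $\mathcal{G}_d$ is strongly connected (Assumption~\ref{str_digr}), there exists a directed path from the starting node $v_j$ to the target node $v_i$, and a shortest such path visits each node at most once; as there are only $n$ nodes, its length $d$ satisfies $d \leq n-1$ (with $d = 0$ in the case $v_i = v_j$). This path will serve as the ``skeleton'' of the favorable event.

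First I would construct one specific realization of the token's motion that places it at $v_i$ at time $n-1$: let the token traverse the chosen shortest path during the first $d$ steps, and then remain at $v_i$ for the remaining $n-1-d$ steps by repeatedly using $v_i$'s self-transmission. The padding with self-loops is the reason the statement fixes the horizon at $n-1$ rather than at $d$, and it is legitimate because the weight assignment grants every node a strictly positive self-transmission probability $b_{jj} = (1+\mathcal{D}_j^+)^{-1}$ alongside the positive probabilities on its genuine out-edges.

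Next I would bound the probability of this single trajectory. At each of the $d$ steps along the path the token sits at a node whose out-degree is at most $\mathcal{D}^+_{max}$, so the relevant forwarding probability is at least $(1+\mathcal{D}^+_{max})^{-1}$; at each of the $n-1-d$ self-loop steps the self-transmission probability $b_{ii} = (1+\mathcal{D}^+_i)^{-1}$ is likewise at least $(1+\mathcal{D}^+_{max})^{-1}$. Because the token-passing process is Markovian, the probability of an exact trajectory factorizes into the product of its per-step transition probabilities, which here is at least $(1+\mathcal{D}^+_{max})^{-(n-1)}$. Finally, since $P^{n-1}_{T_i}$ is the total probability of the token occupying $v_i$ at time $n-1$, summed over all admissible trajectories, it dominates the probability of the single favorable trajectory just constructed, yielding $P^{n-1}_{T_i} \geq (1+\mathcal{D}^+_{max})^{-(n-1)} > 0$.

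I do not anticipate a genuine obstacle here; the argument is elementary once strong connectivity is invoked. The only points requiring care are establishing the bound $d \leq n-1$ on the shortest-path length and correctly padding the trajectory with self-loops so that the token arrives at $v_i$ at \emph{precisely} step $n-1$ rather than earlier, which is exactly where the strictly positive self-transmission probability is used.
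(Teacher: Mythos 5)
Your proof is correct. Note that the paper itself does not prove this lemma---it is imported by citation from \cite{2020:RikosHadj_IFAC}---but your argument (shortest path of length $d \leq n-1$ guaranteed by strong connectivity, padded with self-loops at $v_i$ using the strictly positive self-transmission probability, with each of the $n-1$ transitions bounded below by $(1+\mathcal{D}^+_{max})^{-1}$) is precisely the standard argument behind this bound, and it correctly handles both the trajectory-factorization step and the $v_i = v_j$ edge case.
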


Now we consider Theorem~\ref{Theorem2_Alg2_Converge} which is necessary for analyzing Algorithm~\ref{algorithm1}. 
Due to space considerations we provide a sketch of the proof. 
The proof is an adaptation of the proof in Theorem~$1$ in \cite{2021:Rikos_Hadj_Johan} and is included here for completeness.

\begin{theorem}\label{Theorem2_Alg2_Converge}
Consider a strongly connected digraph $\mathcal{G}_d = (\mathcal{V}, \mathcal{E})$ with $n=|\mathcal{V}|$ nodes and $m=|\mathcal{E}|$ edges. 
At time step $k=0$, each node $v_j$ sets $y_j[0] := \pi^{\mathrm{upper}} (l_j + u_j)$ and $z_j[0] = \pi_j^{\max}$ where $l_j, u_j, \pi^{\mathrm{upper}}, \pi_j^{\max} \in \mathbb{N}$. 
Suppose that each node $v_j \in \mathcal{V}$ follows the Initialization and Iteration steps as described in Algorithm~\ref{algorithm1}. 
For any $\varepsilon$, where $0 < \varepsilon < 1$, there exists $k_0 \in \mathbb{N}$, so that with probability $(1-\varepsilon)^{(y^{init}+n)}$ we have
\begin{equation}\label{tasks_convergence}
( q^s_j[k] = \lfloor q^{\mathrm{tasks}} \rfloor \ , \ \ k \geq k_0) \ \ \mathrm{or} \ \ ( q^s_j[k] = \lceil q^{\mathrm{tasks}} \rceil \ , \ \ k \geq k_0)  ,
\end{equation}
for every $v_j \in \mathcal{V}$, where $q^{\mathrm{tasks}}$, $y^{init}$ fulfill \eqref{initial_dist_processors},  \eqref{initial_distance_no_oscill}. 
\end{theorem}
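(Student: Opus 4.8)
The plan is to recognize Theorem~\ref{Theorem2_Alg2_Converge} as a direct generalization of the quantized average consensus result of Theorem~\ref{Conver_Quant_Av}, specialized to a non-uniform initialization of the denominator mass. In Theorem~\ref{Conver_Quant_Av} the nodes implicitly start from $z_j[0]=1$, so that $\sum_{v_j \in \mathcal{V}} z_j[0] = n$ and the computed quantity is the average $\frac{\sum_l y_l[0]}{n}$. Here the only structural change is that Algorithm~\ref{algorithm1} initializes $z_j[0]=\pi_j^{\max}$, hence the mass ratio the iteration converges to becomes
\begin{equation*}
\frac{\sum_{v_j \in \mathcal{V}} y_j[0]}{\sum_{v_j \in \mathcal{V}} z_j[0]} = \frac{\pi^{\mathrm{upper}}\sum_{v_j \in \mathcal{V}}(l_j+u_j)}{\sum_{v_j \in \mathcal{V}}\pi_j^{\max}} = q^{\mathrm{tasks}},
\end{equation*}
matching \eqref{initial_dist_processors}. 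Thus the bulk of the argument can be carried over from the proof of Theorem~$1$ in \cite{2021:Rikos_Hadj_Johan}, tracking the places where the denominator mass enters.

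First I would establish \emph{mass conservation} for both variables. Summing \eqref{no_del_eq_y_no_oscil} and \eqref{no_del_eq_z_no_oscil} over all nodes and observing that the splitting-and-routing operations in steps $5.2$--$5.5$ only relocate mass among the $c^{y}_{lj}[k], c^{z}_{lj}[k]$ counters (each emitted piece is received exactly once), I obtain $\sum_{v_j \in \mathcal{V}} y_j[k] = \sum_{v_j \in \mathcal{V}} y_j[0]$ and $\sum_{v_j \in \mathcal{V}} z_j[k] = \pi^{\max}$ for all $k$. This fixes the global ratio as the invariant $q^{\mathrm{tasks}}$ and confirms that the denominator mass, though larger than in the $z_j[0]=1$ case, is preserved throughout.

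Second, the probabilistic core: I would invoke Lemma~\ref{Lemma1_prob} so that each indivisible unit of mass (a ``token'') can be routed to any prescribed node within $n-1$ steps with probability at least $(1+\mathcal{D}^+_{max})^{-(n-1)}>0$. Following \cite{2021:Rikos_Hadj_Johan}, I would construct a favorable sequence of events in which all $y$- and $z$-mass first aggregates at a single node, and is then redistributed so that every node acquires a positive $z^s_j$ and a ratio $q^s_j[k]=\lceil y^s_j[k]/z^s_j[k]\rceil$ lying in $\{\lfloor q^{\mathrm{tasks}}\rfloor, \lceil q^{\mathrm{tasks}}\rceil\}$. The number of favorable routing events required scales as $y^{init}+n$ --- one contribution per unit of excess mass (measured by $y^{init}$ as in \eqref{initial_distance_no_oscill}) that must be relocated, plus one per node to seed each with mass --- so bounding each event's success probability below by $(1-\varepsilon)$ yields the stated $(1-\varepsilon)^{(y^{init}+n)}$. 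Finally I would check that once this favorable sequence completes at a finite $k_0$, the event-triggered update of step $5.1$ freezes $q^s_j[k]$ at the correct floor or ceiling for all $k\geq k_0$, giving \eqref{tasks_convergence}.

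The main obstacle is the bookkeeping in the redistribution step under the generalized initialization. Because each node now splits $y_j[k]$ into $z_j[k]$ integer pieces with $z_j[0]=\pi_j^{\max}$ rather than a single unit, I must verify that throughout the favorable sequence every emitted piece stays within one of $\lfloor y^s_j/z^s_j\rfloor$, so that the per-node ratios cannot escape the two-value band $\{\lfloor q^{\mathrm{tasks}}\rfloor, \lceil q^{\mathrm{tasks}}\rceil\}$ once consensus mass is reached. Confirming that this ``pieces differ by at most one'' invariant is maintained when the denominator mass is non-uniform --- and that it is exactly this invariant that pins the exponent at $y^{init}+n$ --- is the delicate part of adapting the original argument.
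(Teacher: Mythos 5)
Your proposal assembles the right ingredients (mass conservation, Lemma~\ref{Lemma1_prob}, epochs of length $\tau(n-1)$, an exponent of $y^{init}+n$), but its probabilistic core rests on a favorable event that cannot occur under Algorithm~\ref{algorithm1}. You propose a sequence of events in which ``all $y$- and $z$-mass first aggregates at a single node, and is then redistributed so that every node acquires a positive $z^s_j$.'' This is incompatible with the algorithm's dynamics: in step $5.4$ every node always assigns the last remaining piece (the one with minimum $y$-value) to itself, so each node retains at least one piece at every time step, i.e., $z_j[k] \geq 1$ for all $v_j$ and all $k$. In the paper's token language, each node holds a \emph{stationary} token $T_j^{ins}$ that is never transmitted. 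Consequently the event ``all mass aggregates at one node'' has probability zero, and no lower bound on the success probability can be extracted from it; likewise, the ``$+n$ to seed each node with mass'' part of your exponent accounting has nothing to attach to, since no node ever needs re-seeding.

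The paper's proof takes a different (and feasible) route: it models the operation as $\pi^{\max}-n$ \emph{wandering} tokens performing independent random walks plus $n$ stationary tokens, one pinned at each node, each token carrying $z$-value $1$ and an integer $y$-value. Convergence is driven by pairwise equalization: whenever a wandering token visits a node, its $y$-value and the stationary token's $y$-value are balanced to within $1$ of each other. Lemma~\ref{Lemma1_prob} bounds below the probability that a given wandering token reaches a given node within $n-1$ steps; hence within an epoch of $\tau(n-1)$ steps, with $\tau$ as in \eqref{windows_for_conv_1_no_oscil}, the visit occurs with probability at least $1-\epsilon$. Since on the order of $y^{init}$ such equalizing visits suffice to bring every token's $y$-value into $\{\lfloor q^{\mathrm{tasks}}\rfloor, \lceil q^{\mathrm{tasks}}\rceil\}$, running $(y^{init}+n)$ epochs and multiplying the per-epoch bounds gives the stated probability $(1-\varepsilon)^{(y^{init}+n)}$ at $k_0 = (y^{init}+n)\tau(n-1)$. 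If you replace your aggregation-redistribution event with this visit-and-equalize mechanism, the rest of your outline (conservation of $\sum_j y_j[k]$ and $\sum_j z_j[k]=\pi^{\max}$, the role of the non-uniform initialization $z_j[0]=\pi_j^{\max}$, and the freezing of $q^s_j[k]$ in the two-value band) goes through as you describe.
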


\begin{proof}
The operation of Algorithm~\ref{algorithm1} can be interpreted as the ``random walk'' of $\pi^{\max} - n$ ``tokens'' in a Markov chain, where $\pi^{\max} = \sum_{v_{j} \in \mathcal{V}} \pi_j^{\max}$ and $n=|\mathcal{V}|$. 
Specifically, at time step $k=0$, node $v_j$ holds $\pi_j^{\max}$ ``tokens". 
One token is $T_j^{ins}$ and is stationary, whereas the other $\pi_j^{\max}-1$ tokens are $T_j^{out, \vartheta}$, where $\vartheta = 1, 2, ..., \pi_j^{\max} - 1$, and perform independent random walks. 
Each token $T_j^{ins}$ and $T_j^{out, \vartheta}$ contains a pair of values $y_j^{ins}[k]$, $z_j^{ins}[k]$, and $y_j^{out, \vartheta}[k]$, $z_j^{out, \vartheta}[k]$, where $\vartheta = 1, 2, ..., \pi_j^{\max} - 1$, respectively. 
Initially, we have (i) $y_j^{ins}[0] = \lceil y_j[0] / z_j[0] \rceil$, (ii) $y_j^{out, \vartheta}[0] = \lceil y_j[0] / z_j[0] \rceil$ or $y_j^{out, \vartheta}[0] = \lfloor y_j[0] / z_j[0] \rfloor$ and (iii) $z_j^{ins}[0] = z_j^{out, \vartheta}[0] = 1$ for $\vartheta = 1, 2, ..., \pi_j^{\max} - 1$, such that 
$
y_j^{ins}[0] + \sum_{\vartheta = 1}^{\pi_j^{\max} - 1} y_j^{out, \vartheta}[0] = y_j[0], 
$
and
$
z_j^{ins}[0] + \sum_{\vartheta = 1}^{\pi_j^{\max} - 1} z_j^{out, \vartheta}[0] = z_j[0] . 
$
At each time step $k$, each node $v_j$ keeps the token $T_j^{ins}$ (i.e., it never transmits it) while it transmits the tokens $T_j^{out, \vartheta}$, where $\vartheta = 1, 2, ..., \pi_j^{\max} - 1$, independently to out-neighbors according to the nonzero probability $b_{lj}$ it assigned to its outgoing edges $m_{lj}$ during the Initialization Steps. 
If $v_j$ receives one or more tokens $T_i^{out, \vartheta}$ from its in-neighbors $v_i$ the values $y_i^{out, \vartheta}[k]$ and $y_j^{ins}[k]$ become equal (or with maximum difference equal to $1$); then $v_j$ transmits each received token $T_i^{out, \vartheta}$ to a randomly selected out-neighbor according to the nonzero probability $b_{lj}$ it assigned to its outgoing edges $m_{lj}$. 
Note here that during the operation of Algorithm~\ref{algorithm1} we have 
\begin{equation}\label{sum_preserve}
\sum_{j=1}^n \sum_{\vartheta = 1}^{\pi_j^{\max} - 1} y^{out, \vartheta}_j[k] + \sum_{j=1}^n y^{ins}_j[k] = \sum_{j=1}^n y_j[0] , \ \forall k \in \mathbb{Z}_+ .
\end{equation}

The main idea of this proof is that one token $T^{out, \vartheta}_{\lambda}$ visits a specific node $v_i$ (for which it holds $| y^{out, \vartheta}_{\lambda} - y^{ins}_i | > 1$) and obtains equal values $y$ (or with maximum difference between them equal to $1$) with the token $T^{ins}_i$ which is kept in node $v_i$. 
Thus, we analyze the required time steps for the specific token $T^{out, \vartheta}_{\lambda}$ (which performs random walk) to visit node $v_i$ according to a probability. 
Note here that if each token $T^{out, \vartheta}_{\lambda}$ visits each node $v_i$, $y^{init}$ times then every token in the network (the tokens performing random walk and the stationary tokens) obtains $y$ value equal to $\lfloor q^{\mathrm{tasks}} \rfloor$ or $\lceil q^{\mathrm{tasks}} \rceil$.  

From Lemma~\ref{Lemma1_prob} we have that the probability $P^{n-1}_{T^{out}}$ that ``the specific token $T_{\lambda}^{out, \vartheta}$ is at node $v_i$ after $n-1$ time steps'' is 
\begin{equation}\label{lowerProf_no_oscil}
P^{n-1}_{T^{out}} \geq (1+\mathcal{D}^+_{max})^{-(n-1)} .
\end{equation}
This means that the probability $P^{n-1}_{N\_ T^{out}}$ that ``the specific token $T_{\lambda}^{out, \vartheta}$ has not visited node $v_i$ after $n-1$ time steps'' is
\begin{equation}\label{lowerProf_not_no_oscil}
P^{n-1}_{N\_ T^{out}} \leq 1- (1+\mathcal{D}^+_{max})^{-(n-1)} .
\end{equation}
By extending this analysis, we can state that for any $\epsilon$, where $0 < \varepsilon < 1$ and after $\tau(n-1)$ time steps where
\begin{equation}\label{windows_for_conv_1_no_oscil}
\tau \geq \Big \lceil \dfrac{\log{\epsilon}}{\log{(1 - (1+\mathcal{D}^+_{max})^{-(n-1)})}} \Big \rceil ,
\end{equation}
the probability $P^{\tau}_{N\_ T^{out}}$ that ``the specific token $T_{\lambda}^{out, \vartheta}$ has not visited node $v_i$ after $\tau (n-1)$ time steps'' is
\begin{equation}\label{ProbNotMeet_after_t_no_oscil}
P^{\tau}_{N\_ T^{out}} \leq [P^{n-1}_{N\_ T^{out}}]^{\tau} \leq \epsilon .
\end{equation}
This means that after $\tau (n-1)$ time steps, where $\tau$ fulfills \eqref{windows_for_conv_1_no_oscil}, the probability that ``the specific token $T_{\lambda}^{out, \vartheta}$ has visited node $v_i$ after $\tau (n-1)$ time steps'' is equal to $1-\epsilon$.

Thus, by extending this analysis, for $k \geq (y^{init} + n) \tau (n-1)$, where $y^{init}$ fulfills \eqref{initial_distance_no_oscill} and $\tau$ fulfills \eqref{windows_for_conv_1_no_oscil}, we have 
\begin{equation}\nonumber
( q^s_j[k] = \lfloor q^{\mathrm{tasks}} \rfloor ) \ \ \mathrm{or} \ \ ( q^s_j[k] = \lceil q^{\mathrm{tasks}} \rceil )  ,
\end{equation}
with probability $(1-\varepsilon)^{(y^{init} + n)}$, for every $v_j \in \mathcal{V}$. 
\end{proof}

We are now ready to present the proof of Theorem~\ref{thm:main}. 

\begin{proof}
From Theorem~\ref{Theorem2_Alg2_Converge} we have that the operation of Algorithm~\ref{algorithm1} can be interpreted as the ``random walk'' of $\pi^{\max} - n$ ``tokens'' in a Markov chain, where $\pi^{\max} = \sum_{v_{j} \in \mathcal{V}} \pi_j^{\max}$ and $n=|\mathcal{V}|$. 
Furthermore, we also have that $n$ ``tokens'' remain stationary, one token at each node. 
Each of the $\pi^{\max} - n$ tokens contains a pair of values $y^{out, \vartheta}[k]$, $z^{out, \vartheta}[k]$, where $\vartheta = 1, 2, ..., \pi^{\max} - n$ and each of the $n$ stationary tokens contains a pair of values $y^{ins}[k]$, $z^{ins}[k]$. 
From Theorem~\ref{Theorem2_Alg2_Converge} we have that after $(y^{init} + n) \tau (n-1)$ time steps, where $y^{init}$ fulfills \eqref{initial_distance_no_oscill} and $\tau$ fulfills \eqref{windows_for_conv_1_no_oscil}, the state $q^s_j[k]$ of each node $v_j$ becomes  
$
q^s_j[k] = \lfloor q^{\mathrm{tasks}} \rfloor \ \ \mathrm{or} \ \ q^s_j[k] = \lceil q^{\mathrm{tasks}} \rceil \ , 
$
with probability $(1-\varepsilon)^{(y^{init} + n)}$, where $0 < \varepsilon < 1$, and $q^{\mathrm{tasks}}$ fulfills \eqref{initial_dist_processors}. 
This means that, after $(y^{init} + n) \tau (n-1)$ time steps, where $y^{init}$ fulfills \eqref{initial_distance_no_oscill} and $\tau$ fulfills \eqref{windows_for_conv_1_no_oscil}, for each of the $\pi^{\max} - n$ tokens in the network it holds that $y^{out, \vartheta}[k] = \lfloor q^{\mathrm{tasks}} \rfloor$ or $y^{out, \vartheta}[k] = \lceil q^{\mathrm{tasks}} \rceil$, $\vartheta = 1, 2, ..., \pi^{\max} - n$, while for each of the $n$ stationary tokens in the network it also holds that $y^{ins}[k] = \lfloor q^{\mathrm{tasks}} \rfloor$ or $y^{ins}[k] = \lceil q^{\mathrm{tasks}} \rceil$, with probability $(1-\varepsilon)^{(y^{init} + n)}$, where $0 < \varepsilon < 1$. 
Specifically, the $y$ value of every token in the network is equal either to $\lfloor q^{\mathrm{tasks}} \rfloor$ or $\lceil q^{\mathrm{tasks}} \rceil$ after $(y^{init} + n) \tau (n-1)$ time steps with probability $(1-\varepsilon)^{(y^{init} + n)}$, where $0 < \varepsilon < 1$. 

During the operation of Algorithm~\ref{algorithm1}, every $D$ time steps each node $v_j$ re-initializes its voting variables $M_j$, $m_j$ to be $M_j = \lceil y_j[k] / z_j[k] \rceil$, $m_j = \lfloor y_j[k] / z_j[k] \rfloor$. 
After $(y^{init} + n) \tau (n-1)$ time steps the value $q^s_j[k]$ of each node $v_j$ is equal to $\lfloor q^{\mathrm{tasks}} \rfloor$ or $\lceil q^{\mathrm{tasks}} \rceil$, with probability $(1-\varepsilon)^{(y^{init} + n)}$, where $0 < \varepsilon < 1$. 
This means that $M_j$, $m_j$ are re-initialized to be equal to $M_j = \lfloor q^{\mathrm{tasks}} \rfloor$ or $M_j = \lceil q^{\mathrm{tasks}} \rceil$ and $m_j = \lfloor q^{\mathrm{tasks}} \rfloor$ or $m_j = \lceil q^{\mathrm{tasks}} \rceil$ after $\lceil ((y^{init} + n) \tau (n-1) / D) \rceil D$ time steps with probability $(1-\varepsilon)^{(y^{init} + n)}$, where $0 < \varepsilon < 1$. 
After an additional number of $D$ time steps the variables $M_j$, $m_j$ of each node are updated to $M_j = \lfloor q^{\mathrm{tasks}} \rfloor$ and $m_j = \lfloor q^{\mathrm{tasks}} \rfloor$ (since, the $\max-$consensus algorithm \cite{2008:Cortes} converges after $D$ time steps). 
Thus, the condition $M_j - m_j \leq 1$ holds for every node $v_j$. 
This means that every node $v_j$ calculates the optimal required workload $w_j^*$ and terminates its operation. 
As a result, we have that after $\lceil ((y^{init} + n) \tau (n-1) / D) \rceil D + D$ time steps each node $v_j$ calculates the optimal required workload $w_j^* = \lceil q^{\mathrm{tasks}} (\pi_j^{\max} / \pi^{\mathrm{upper}}) \rceil$ with probability $(1-\varepsilon)^{(y^{init} + n)}$, where $0 < \varepsilon < 1$. 
\end{proof}

\bibliographystyle{IEEEtran}
\bibliography{bibliography}

\end{document}